\renewcommand{\baselinestretch}{1.25}
\renewcommand{\thefootnote}{\fnsymbol{footnote}}	
\newcommand\DateFootnote{
\begingroup
\renewcommand\thefootnote{}
\footnote{%January 20, 2015, Revised 
\today}
\setcounter{footnote}{0}
\vspace*{-5ex}
\endgroup}
\renewcommand\section{\@startsection {section}{1}{\z@}%
                                   {-3ex \@plus -1ex \@minus -.2ex}%
                                   {2ex \@plus.2ex}%
                                   {\normalfont\large\bfseries}}
\renewcommand\subsection{\@startsection{subsection}{2}{\z@}%
                                     {-2.5ex\@plus -1ex \@minus -.2ex}%
                                     {1.5ex \@plus .2ex}%
                                     {\normalfont\normalsize\bfseries}}
\renewcommand\subsubsection{\@startsection{subsubsection}{3}{\z@}%
                                     {-2ex\@plus -1ex \@minus -.2ex}%
                                     {1ex \@plus .2ex}%
                                     {\normalfont\normalsize\bfseries}}
 \renewcommand\paragraph{\@startsection{paragraph}{4}{\z@}%
                                    {1.5ex \@plus.5ex \@minus.2ex}%
                                    {-1em}%
                                    {\normalfont\normalsize\bfseries}}
\renewcommand\subparagraph{\@startsection{subparagraph}{5}{\parindent}%
                                       {1.5ex \@plus.5ex \@minus .2ex}%
                                       {-1em}%
                                      {\normalfont\normalsize\bfseries}}
\newcommand{\msn}[1]{MR:\,\href{http://www.ams.org/mathscinet-getitem?mr=MR#1}{#1}}
\newcommand{\doi}[1]{doi:\,\href{http://dx.doi.org/#1}{#1}}
\theoremstyle{plain}
\newtheorem{thm}{Theorem}
\theoremstyle{definition}
\newcommand{\floor}[1]{\lfloor{#1}\rfloor}
\renewcommand{\leq}{\leqslant}
\begin{document}

{\Large\bfseries\boldmath Three dimensional graph drawing with fixed vertices and one bend per edge}

\DateFootnote

{\large 
David~R.~Wood\,\footnotemark[3]
}

\footnotetext[3]{School of Mathematical Sciences, Monash University, Melbourne, Australia (\texttt{david.wood@monash.edu}). \\
Research supported by the Australian Research Council.}

%\emph{Abstract.} We prove that every digraph has a vertex 4-colouring such that for each vertex $v$, at most half the out-neighbours of $v$ receive the same colour as $v$. We then discuss the conjecture obtained by replacing ``4'' by ``3''. 
%
%\bigskip
\bigskip
%\hrule

A \emph{three-dimensional grid-drawing} of a graph represents the vertices by distinct points in $\mathbb{Z}^3$ (called \emph{grid-points}), and represents each edge as a polyline between its endpoints with bends (if any) also at gridpoints, such that distinct edges only intersect at common endpoints, and each edge only intersects a vertex that is an endpoint of that edge. This topic has been previously studied in \citep{Dev06,BCMW-JGAA04,DujWoo-DMTCS05,Wismath-TR04,MW15}. We focus on the problem of producing such a drawing, where the vertices are fixed at given grid-points. This variant has been previously studied in \citep{MW15,MorinWood-JGAA04}. \citet{MW15} recently proved the following theorem:

\begin{thm}
\label{MW}
For every graph $G$ with $n$ vertices, given fixed locations for the vertices of $G$ in $\mathbb{Z}^3$, there is a 
three-dimensional  grid-drawing of $G$ with at most three bends per edge. 
\end{thm}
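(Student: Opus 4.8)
The plan is to route every edge \emph{over the top}: lift it to a private horizontal plane lying strictly above all the fixed vertices, traverse that plane, and come back down. Enclose the $n$ fixed vertices in an axis-parallel box $B$, and for each of the $m \leq \binom n2$ edges $e$ pick a distinct integer $h_e$ with the plane $\pi_e := \{z = h_e\}$ lying above $B$; we reserve $\pi_e$ for edge $e$ alone (the integer grid is unbounded, so using large coordinates costs nothing). For $e = uv$ I will place the three bends $a_e, b_e, c_e$ so that the first segment $ua_e$ is an \emph{ascent} from $u$ up to a point $a_e \in \pi_e$, the last segment $c_ev$ is a \emph{descent} to $v$ down from a point $c_e \in \pi_e$, and the middle bend $b_e$ makes the two remaining segments $a_eb_e$ and $b_ec_e$ a one-bend path inside $\pi_e$ joining $a_e$ to $c_e$.

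First I would choose the $2m$ ascent and descent segments greedily, one at a time. When fixing the ascent of $e$ from $u$, I must pick $a_e \in \pi_e$ so that (i) $ua_e$ misses every vertex other than $u$; (ii) $ua_e$ misses every ascent/descent segment chosen so far; and (iii) $ua_e$ is not collinear with a segment already drawn at $u$. Each condition forbids only a lower-dimensional piece of $\pi_e \cong \mathbb{Z}^2$: for (i), a given vertex lies on $ua_e$ for at most one position of $a_e$, so at most $n-1$ points are lost; for (ii), projecting a previously drawn segment from $u$ into $\pi_e$ yields a line segment, so each such segment forbids at most a line; for (iii), each prior segment at $u$ forbids one line. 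Since finitely many points and lines cannot exhaust the infinite set $\mathbb{Z}^2$, a legal integer $a_e$ exists, and descents are handled symmetrically. Afterwards, any two drawn segments sharing a vertex $u$ are non-collinear and hence meet only at $u$, and no ascent or descent meets a non-endpoint vertex or another ascent or descent.

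The in-plane parts are chosen last. Fix $e$: every other edge's ascent or descent crosses $\pi_e$ in at most one point, so inside $\pi_e$ there is only a finite set $q_1,\dots,q_k$ of points to avoid (no vertex lies in $\pi_e$, and no other edge has any segment there). The one-bend path $a_eb_ec_e$ then avoids all the $q_i$ as long as $b_e$ avoids the finitely many rays out of $a_e$ through some $q_i$ and out of $c_e$ through some $q_i$; this again leaves infinitely many integer choices for $b_e$, and the path meets $e$'s own ascent and descent only at $a_e$ and $c_e$. Putting the pieces together yields a three-dimensional grid-drawing of $G$ with at most three bends per edge.

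I expect the real work — more a matter of care than of depth — to be the middle step: verifying that every one of the avoidance conditions (i)--(iii), and the analogous ones for descents, really does forbid only a finite union of points and lines, so that the greedy choice never gets stuck; and noticing that the in-plane paths must be deferred until all ascents and descents are fixed, so that the only obstacles they face are the (by then determined) piercing points of the other edges. One should also confirm that the obvious degenerate configurations — a vertex lying directly ``above'' $u$, two edges at a common vertex with nearly-aligned ascents, or the ascent and descent of a single edge — all fall under the same finite-avoidance argument.
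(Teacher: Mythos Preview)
The paper does not itself prove \cref{MW}; it is quoted as a result of \citet{MW15}, and the paper's own contribution is the stronger one-bend \cref{main}. So there is no in-paper proof of this particular statement to compare your attempt against.

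That said, your construction is correct. Reserving a private horizontal plane $\pi_e$ above the vertex set for each edge, lifting into it with a greedily chosen ascent, crossing it with a one-bend path, and descending again does yield a valid three-bend grid-drawing: every avoidance constraint you list forbids only a finite union of points and lines in the infinite integer plane $\pi_e\cap\mathbb{Z}^3$, so none of the greedy choices can get stuck. The self-intersection worry you flag is benign as well: the descent $vc_e$ is chosen after the ascent $ua_e$, so condition (ii) already keeps them disjoint (in particular $a_e\ne c_e$); and since $ua_e$ and $vc_e$ each meet $\pi_e$ only at $a_e$ and $c_e$ respectively, once $b_e$ is chosen off the line through $a_e$ and $c_e$ the whole polyline $u\,a_e\,b_e\,c_e\,v$ is simple.

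For contrast, the paper's proof of the one-bend \cref{main} uses a rather different idea. For each edge $vw$ it identifies a vertical line $L(v,w)$ \emph{all} of whose grid-points are simultaneously visible from $v$ and from $w$ --- obtained by taking $x$-coordinate within $1$ of $v$'s and $y$-coordinate within $1$ of $w$'s --- and then places the single bend on $L(v,w)$ at a height avoiding finitely many forbidden values. That visibility trick is exactly what collapses three bends to one; your private-plane scaffolding is more elementary and needs no such trick, but, as one would expect, pays for it with the two extra bends.
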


We prove the same result with one bend per edge. 

\begin{thm}
\label{main}
For every graph $G$ with $n$ vertices and $m$ edges, given fixed locations for the vertices of $G$ in $\mathbb{Z}^3$, there is a 
three-dimensional  grid-drawing of $G$ with one bend per edge. 
\end{thm}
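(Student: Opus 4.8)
The plan is to draw every edge as a two‑segment polyline whose one bend is a lattice point on the moment curve $\Gamma=\{(t,t^{2},t^{3}):t\in\mathbb{Z}\}\subseteq\mathbb{Z}^{3}$, choosing the bends one edge at a time and in sufficiently general position. Fix any ordering $e_{1},\dots,e_{m}$ of the edges and write $e_{i}=x_{i}y_{i}$. Having chosen $b_{1},\dots,b_{i-1}\in\Gamma$ and drawn each $e_{j}$ $(j<i)$ as the polyline from $x_{j}$ via $b_{j}$ to $y_{j}$, I would pick $b_{i}=(t_{i},t_{i}^{2},t_{i}^{3})$ with $t_{i}$ an integer avoiding a finite list of forbidden values (determined by the fixed vertices and by $b_{1},\dots,b_{i-1}$), and draw $e_{i}$ via $b_{i}$. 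Since every such edge is the union of two straight segments, each joining a vertex to a bend, it suffices to ensure, for each $i$, that: (a) $b_{i}$ differs from every vertex and from $b_{1},\dots,b_{i-1}$; (b) neither segment $[x_{i},b_{i}]$ nor $[b_{i},y_{i}]$ contains a vertex other than $x_{i},y_{i}$; and (c) no segment of $e_{i}$ meets a segment of any $e_{j}$ with $j<i$, except at a vertex common to $e_{i}$ and $e_{j}$.

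Condition (a) excludes $b_{i}$ from finitely many points. Condition (b) is equivalent to $b_{i}$ avoiding all lines through an endpoint of $e_{i}$ and another vertex; in fact I would forbid $b_{i}$ from \emph{every} line spanned by two vertices and from every line spanned by a vertex and one of $b_{1},\dots,b_{i-1}$. The first of these also guarantees that $x_{i},b_{i},y_{i}$ are not collinear, so the polyline genuinely has a single bend; the second guarantees that $b_{i}$ does not lie on any previously drawn edge.

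For condition (c) I would use the elementary fact that two segments in $\mathbb{R}^{3}$ with four distinct endpoints are disjoint unless those endpoints are coplanar (a common point of the two segments lies on both carrying lines, forcing all four endpoints into one plane). If $e_{i}$ and $e_{j}$ share no vertex, then for each endpoint $\xi$ of $e_{i}$ and each endpoint $\eta$ of $e_{j}$ the points $\xi,b_{i},\eta,b_{j}$ are distinct, so it is enough to keep $b_{i}$ off the plane through $\xi$, $\eta$, $b_{j}$. If $e_{i}$ and $e_{j}$ share a vertex $v$, the two segments of these edges incident to $v$ meet only at $v$ because $v,b_{i},b_{j}$ are non-collinear (already arranged), and for each of the remaining pairs of segments the coplanarity obstruction is again a plane through $b_{j}$ and two vertices. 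Thus every plane that must be avoided has the form ``plane through two vertices and some $b_{j}$'', and since $b_{j}$ was chosen off every line spanned by two vertices, these three points are non-collinear, so the plane is honestly two-dimensional. (Parallel edges, if allowed, are handled identically, the two endpoints being interchangeable; a loop may be drawn as a single segment or disallowed.) Now invoke the defining property of $\Gamma$: a plane meets $\Gamma$ in at most three points, a line in at most two, a point in at most one. Hence the union of all the forbidden points, lines and planes meets $\Gamma$ in only finitely many points — that is, rules out only finitely many values of $t_{i}$ — and any remaining integer value yields an admissible $b_{i}$. Iterating for $i=1,\dots,m$ produces the required one-bend drawing, indeed one with all bends on the moment curve.

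The step I expect to demand the most care is the case analysis behind condition (c): one has to split according to whether two edges share $0$ or $1$ endpoints (and $2$, in a multigraph), check in each sub-case that non-coplanarity of the four relevant endpoints is exactly what keeps the corresponding segments apart, and verify that the plane to be avoided is nondegenerate — which is precisely why the line conditions imposed when choosing the earlier bends must forbid \emph{all} vertex--vertex lines, not merely those through the endpoints of the current edge. The distinctness conditions, the ``no vertex inside an edge'' conditions, and the counting on $\Gamma$ are all routine.
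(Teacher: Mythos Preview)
Your argument is correct. Placing the bends on the integer moment curve and excluding finitely many values of $t$ via the ``at most $3$ points per plane, $2$ per line'' property is a clean general-position strategy, and your case analysis for condition (c) is sound: non-coplanarity of the four endpoints forces disjointness, the relevant planes all pass through two vertices and an earlier bend $b_j$, and your line conditions on $b_j$ make those three points non-collinear so each plane is genuine.

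The paper proceeds differently. For an edge $vw$ with $v=(a,b,c)$ and $w=(p,q,r)$ it picks $x\in\{a\pm1\}\setminus\{p\}$ and $y\in\{q\pm1\}\setminus\{b\}$ and places the bend on the vertical lattice line $L=\{(x,y,z):z\in\mathbb{Z}\}$. The point of these $\pm1$ offsets is number-theoretic: the displacement from $v$ to any point of $L$ has first coordinate $\pm1$, hence is primitive, so the segment contains no other grid point (and symmetrically for $w$); this disposes of your condition (b) for free. Since no segment produced is vertical, each previously drawn segment meets $L$ in at most one point, so fewer than $n+4m$ values of $z$ are forbidden. The payoff is a volume bound: the drawing fits in an $(X+2)\times(Y+2)\times\max\{Z,n+4m\}$ box, which the paper then shows is tight up to constants for $K_n$ on a line. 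Your moment-curve construction, while correct, needs $t_i$ of order roughly $n^2+nm$ to clear all the forbidden values, so the bend coordinates can be as large as $(n^2+nm)^3$; you get existence but not the near-optimal volume that is the paper's secondary contribution.
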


\begin{proof}
Consider each edge $vw$ of $G$ in turn. Say $v=(a,b,c)$ and $w=(p,q,r)$ in $\mathbb{Z}^3$.  Choose $x \in \{a-1,a+1\}\setminus \{p\}$ and $y \in \{q-1,q+1\} \setminus \{b\}$. Let $L(v,w):= \{(x,y,z): z \in \mathbb{Z}\}$. Observe that $L(v,w)$ is contained in a vertical line, and every point in $L(v,w)$ is visible from both $v$ and $w$. That is, a segment from $v$ or $w$ to any point in $L(v,w)$ passes through no other point in $\mathbb{Z}^3$. Choose a point $(x,y,z)\in L(v,w)$ such that (1)  no vertex of $G$ is positioned at $(x,y,z)$, (2)  the segment between $v$ and $(x,y,z)$ does not intersect any already drawn edge segment, and (3) the segment between $w$ and $(x,y,z)$ does not intersect any already drawn edge segment. Rule (1) forbids less than $n$ points in $L(v,w)$. 
Note that no edge-segment is drawn as a vertical line by this algorithm. Thus each edge-segment that is already drawn intersects the vertical line containing $L(v,w)$ in at most one point. Hence rule (2) forbids at most one point in $L(v,w)$ for each edge-segment that is already drawn. In total, rule (2) forbids less than $2m$ points in $L(v,w)$. Similarly, rule (3) forbids less than $2m$ points in $L(v,w)$. Since $L(v,w)$ has infinitely many points, there is a point $(x,y,z)\in L(v,w)$ satisfying (1), (2) and (3). Draw $vw$ with one bend at $(x,y,z)$. Then $vw$ passes through no vertex and intersects no other edge (except of course at $v$ or $w$). 
\end{proof}

The \emph{volume} of a three-dimensional grid-drawing is the number of grid points in a minimum axis-aligned box that contains the drawing.  \citet{MW15} considered the volume of the drawing produced by \cref{MW} to be ``unconstrained'', although they did provide volume bounds for a different result with the vertices in the plane. \citet{MW15} state that ``the general 3D point-set embeddability problem in which the specified point-set is not constrained to a plane remains as an interesting open problem if the volume must be constrained.'' We now show that the drawings produced by \cref{main} have constrained volume. In fact, in a certain sense the  volume is optimal. 

Say the initial vertex set is contained in an $X \times Y \times Z$ bounding box, without loss of generality, $[1,X] \times [1,Y] \times [1,Z]$. Then for each edge, the algorithm may choose the bend point $(x,y,z)$ with $x\in[0,X+1]$ and $y\in[0,Y+1]$ and $z\in[1,\max\{Z,n+4m\}]$. Thus the drawing is contained in an $(X+2) \times (Y+2) \times \max\{ Z, n+ 4m\}$ bounding box. 

We now show that in a special case, this volume bound is best possible. Say $G=K_n$ with the vertices at $(1,0,0),\dots,(n,0,0)$. Using the above notation, $X=n$ and $Y=1$ and $Z=1$. The above volume upper bound is  $(X+2)(Y+2) \max\{ Z, n+ 4m\} \leq O(n^3)$. \citet{MorinWood-JGAA04} proved that every 1-bend drawing of an $n$-vertex graph $G$ with vertices fixed on a line has volume at least $kn/2$ where $k$ is the cutwidth of $G$. The cutwidth of $K_n$ equals $\floor{n^2/4}$. Thus the volume of any 1-bend drawing of $K_n$, with  these vertex locations, is at least $n^3/8$, which is within a constant factor of the above volume upper bound.

%\bibliographystyle{plainnat}
%\bibliography{myBibliography}
\def\soft#1{\leavevmode\setbox0=\hbox{h}\dimen7=\ht0\advance \dimen7
  by-1ex\relax\if t#1\relax\rlap{\raise.6\dimen7
  \hbox{\kern.3ex\char'47}}#1\relax\else\if T#1\relax
  \rlap{\raise.5\dimen7\hbox{\kern1.3ex\char'47}}#1\relax \else\if
  d#1\relax\rlap{\raise.5\dimen7\hbox{\kern.9ex \char'47}}#1\relax\else\if
  D#1\relax\rlap{\raise.5\dimen7 \hbox{\kern1.4ex\char'47}}#1\relax\else\if
  l#1\relax \rlap{\raise.5\dimen7\hbox{\kern.4ex\char'47}}#1\relax \else\if
  L#1\relax\rlap{\raise.5\dimen7\hbox{\kern.7ex
  \char'47}}#1\relax\else\message{accent \string\soft \space #1 not
  defined!}#1\relax\fi\fi\fi\fi\fi\fi}

\end{document}